\documentclass[11pt]{article}

\usepackage[letterpaper,margin=1in]{geometry}
\usepackage[T1]{fontenc}
\usepackage[utf8]{inputenc}
\usepackage{lmodern}
\usepackage{microtype}
\usepackage{amsmath,amssymb,amsthm,mathtools}
\usepackage{booktabs,tabularx,array}
\usepackage{enumitem}
\usepackage[dvipsnames]{xcolor}
\usepackage{hyperref}

\hypersetup{
  colorlinks=true,
  linkcolor=MidnightBlue,
  citecolor=MidnightBlue,
  urlcolor=RoyalBlue,
  pdftitle={Fock-Space Geometry and Demailly's Conjecture},
  pdfauthor={Trung Hoa Dinh}
}

\setlist[itemize]{leftmargin=1.5em,itemsep=0.2em,topsep=0.35em}

\newtheorem{theorem}{Theorem}[section]
\newtheorem{proposition}[theorem]{Proposition}
\newtheorem{corollary}[theorem]{Corollary}

\theoremstyle{remark}

\newcommand{\PP}{\mathbb P}
\newcommand{\CC}{\mathbb C}

\newcommand{\Sym}{\operatorname{Sym}}
\newcommand{\whalpha}{\widehat{\alpha}}
\newcommand{\cD}{\mathcal D}
\newcommand{\cJ}{\mathcal J}
\newcommand{\HH}{\mathcal H}
\newcommand{\Fock}{\mathcal F_{+}}
\newcommand{\Ran}{\operatorname{Ran}}
\newcommand{\supp}{\operatorname{supp}}

\newcommand{\Tr}{\operatorname{Tr}}
\newcommand{\ket}[1]{\lvert #1\rangle}
\newcommand{\bra}[1]{\langle #1\rvert}
\newcommand{\braket}[2]{\langle #1\mid #2\rangle}
\newcommand{\ketbra}[2]{\lvert #1\rangle\!\langle #2\rvert}

\title{\bfseries Fock-Space Geometry and Demailly's Conjecture}
\author{Trung Hoa Dinh\\
\small Department of Mathematics and Statistics, Troy University\\
\small Troy, Alabama 36082, USA}
\date{September 2026}

\begin{document}
\maketitle

\begin{abstract}
We identify higher-order vanishing of homogeneous polynomials at finite projective point sets with a support--kernel problem in bosonic Fock space.  Under the Fischer--Bargmann realization, polynomial degree becomes particle number, while vanishing to a prescribed order becomes orthogonality to coherent jet excitations.  The least degree admitting a nonzero polynomial with the prescribed multiplicities is therefore the first bosonic sector in which a positive constraint operator acquires a dark state.  We then give an independent proof of Demailly's inequality by amplifying the last full-support certificate in positive characteristic and lifting the resulting rank statement back to characteristic zero.  The argument also identifies the dimensional correction in Demailly's bound with the bounded residue in the Frobenius decomposition.
\end{abstract}

\medskip
\noindent\textbf{Keywords.} Demailly conjecture; symbolic powers; Waldschmidt constant; bosonic Fock space; dark states; coherent jets; apolarity; Frobenius amplification.

\noindent\textbf{MSC 2020.} 14N20, 13A35, 13A15; secondary 81R30, 81Q12.

\section{From projective vanishing to bosonic Fock space}

Demailly's conjecture is a statement about the asymptotic cost of forcing a homogeneous polynomial to vanish deeply at finitely many projective points.  At first sight this belongs entirely to algebraic geometry and commutative algebra.  There is, however, a second description in which the same graded spaces become fixed-particle-number sectors of a bosonic Fock space and the same vanishing conditions become orthogonality constraints.  The purpose of this note is not to use quantum terminology as an analogy, but to isolate the exact Hilbert-space structure that is already present in the polynomial algebra and then use that structure to organize the proof.

The gain from this formulation is structural rather than numerical.  Classical symbolic vanishing asks whether a graded piece is zero; the Fock-space realization packages the same information into the support geometry of a positive synthesis operator.  Its range records the directions visible to the prescribed jets, while its kernel is exactly the surviving symbolic information.  This separates the geometry producing the constraints from the finite-dimensional information they carry.  At any fixed particle number, even a large or continuously parametrized family of constraints is represented by a finite spanning frame inside $\HH_d$.  The resulting support--kernel language is therefore not intrinsically tied to reduced point sets, although the theorem proved here is deliberately restricted to that clean case.  In this sense the quantum formulation does more than rename apolarity: it exposes a finite-information geometry that suggests a broader framework for constrained graded families.

Let $\HH\simeq\CC^{n+1}$ be a one-particle Hilbert space with orthonormal mode basis $\{\ket{e_0},\ldots,\ket{e_n}\}$.  Its symmetric bosonic Fock space is
\[
  \Fock(\HH)=\bigoplus_{d\ge0}\HH_d,
  \qquad
  \HH_d=\Sym^d(\HH).
\]
The number operator $\mathsf N=\sum_j a_j^\dagger a_j$ acts as $dI$ on $\HH_d$.  Thus the grading by homogeneous degree is exactly the grading by total boson number.  In particular, degree should not be confused with mixedness: a normalized vector in $\HH_d$ is still a pure $d$-boson state.  Mixed states will enter later, only after the local constraints are aggregated into a positive operator.

Put $S=\CC[y_0,\ldots,y_n]$ and equip $S_d$ with the Fischer inner product
\[
  \langle y^\alpha,y^\beta\rangle_{\mathrm F}=\alpha!\,\delta_{\alpha\beta}.
\]
The unitary map
\[
  U_d:S_d\longrightarrow\HH_d,
  \qquad
  U_d\!\left(\frac{y^\alpha}{\sqrt{\alpha!}}\right)=\ket{\alpha}
\]
identifies normalized monomials with occupation-number states.  Under $U=\bigoplus_dU_d$,
\[
  U_{d+1}M_{y_j}U_d^{-1}=a_j^\dagger,
  \qquad
  U_{d-1}\frac{\partial}{\partial y_j}U_d^{-1}=a_j.
\]
Hence multiplication is creation and differentiation is annihilation.  This is the finite-degree Fischer--Bargmann realization \cite{Bargmann1961}.

If $\ket\psi=\sum_j\psi_j\ket{e_j}$ is normalized, set $L_\psi(y)=\sum_j\psi_jy_j$ and $a^\dagger(\psi)=\sum_j\psi_ja_j^\dagger$.  Then
\[
  \ket{\psi;d}
  :=\frac{(a^\dagger(\psi))^d}{\sqrt{d!}}\ket\Omega
  =\ket\psi^{\otimes d}
\]
is the symmetric product state with all $d$ bosons occupying the same one-particle mode, and
\[
  U_d\!\left(\frac{L_\psi^d}{\sqrt{d!}}\right)=\ket{\psi;d}.
\]
Projectively, $[\psi]\mapsto[\psi;d]$ is exactly the degree-$d$ Veronese embedding.  General forms in $S_d$ correspond to general symmetric $d$-boson vectors; the Veronese locus is the much smaller locus of completely condensed product states.  This distinction is useful: the projective points at which we impose vanishing are coherent product directions, while the polynomial we are testing may represent an arbitrary symmetric state.

To convert evaluation into a Hilbert-space overlap, let $R=\CC[x_0,\ldots,x_n]$ act on $S$ by apolar differentiation, $x_j\circ g=\partial g/\partial y_j$.  For
\(
 f=\sum_{|\alpha|=d}f_\alpha x^\alpha\in R_d
\)
set
\[
  f^\sharp(y)=\sum_{|\alpha|=d}\overline{f_\alpha}y^\alpha,
  \qquad
  \ket{\Phi_f}=U_d(f^\sharp).
\]
A direct occupation-basis calculation gives
\begin{equation}
  f(\psi)=\frac{1}{\sqrt{d!}}\braket{\Phi_f}{\psi;d}.
  \label{eq:amplitude}
\end{equation}
Thus evaluating a homogeneous polynomial is, up to a universal normalization, a transition amplitude against a coherent $d$-boson product state.

The higher-order statement is equally exact.  For $m\ge1$ and $d\ge m-1$, define the coherent jet sector
\[
  \cJ_d(\psi,m)
  :=\Ran\!\left[(a^\dagger(\psi))^{d-m+1}:\HH_{m-1}\longrightarrow\HH_d\right].
\]
It is the span of $\ket{\psi;d}$ and all perturbative derivatives of the Veronese map through total order $m-1$.  In coordinates with $\psi=e_0$, it is spanned by the occupation states having at most $m-1$ particles outside the condensate mode $e_0$.  Therefore order-$m$ vanishing does not mean that the state contains no information; it means that all information lies in the orthogonal complement of what the first $m-1$ coherent probes can see.

For projective points $P_i=[\psi_i]$ and positive multiplicities $m_i$, put
\[
  I(\mathbf m)=\bigcap_{i=1}^s I(P_i)^{m_i}.
\]
The Emsalem--Iarrobino inverse-system theorem \cite{EmsalemIarrobino1995}, followed by the Fischer--Bargmann unitary, gives the exact bridge
\begin{equation}
  f\in I(\mathbf m)_d
  \quad\Longleftrightarrow\quad
  \ket{\Phi_f}\perp\sum_{i=1}^s\cJ_d(\psi_i,m_i).
  \label{eq:bridge}
\end{equation}
Equivalently,
\[
  I(\mathbf m)_d^{\perp_{\mathrm{ap}}}
  =\sum_i L_{\psi_i}^{d-m_i+1}S_{m_i-1}.
\]
The left side is symbolic vanishing; the right side is the span of all locally visible jet information.

For a finite reduced set $Z=\{P_1,\ldots,P_s\}$ with ideal $I=I(Z)$, uniform multiplicity gives $I^{(m)}=\cap_iI(P_i)^m$.  Its initial degree
\[
  \alpha(I^{(m)})=\min\{d:(I^{(m)})_d\ne0\}
\]
is therefore the first particle number at which a nonzero state can hide simultaneously from all order-$<m$ coherent probes.  The Waldschmidt constant
\[
  \whalpha(I)=\lim_{r\to\infty}\frac{\alpha(I^{(r)})}{r}
\]
measures the asymptotic particle-number cost per unit of vanishing depth.  Demailly's conjecture asks for the finite-to-asymptotic bound
\begin{equation}
  \whalpha(I)\ge
  \frac{\alpha(I^{(m)})+n-1}{m+n-1}.
  \label{eq:demailly}
\end{equation}
The case $m=1$ is Chudnovsky's inequality \cite{Demailly1982,Chudnovsky1981}.

The conjecture was recently proved for arbitrary finite point sets over algebraically closed fields of characteristic zero by H\`a and Sivakumar \cite{HaSivakumar2026}.  The purpose here is different: to identify the exact Fock-space and support--kernel geometry behind symbolic vanishing and to give an independent rank-amplification proof in that language.  A broader coefficientwise Frobenius-dilation framework, including applications to powers of forms, is developed in joint work with T\`ai Huy H\`a, currently submitted \cite{DinhHa2026}.

\begin{center}
\footnotesize
\begin{tabularx}{0.98\textwidth}{@{}>{\raggedright\arraybackslash}p{0.25\textwidth}>{\raggedright\arraybackslash}p{0.3\textwidth}>{\raggedright\arraybackslash}X@{}}
\toprule
\textbf{Projective/algebraic object} & \textbf{Bosonic object} & \textbf{Meaning}\\
\midrule
$P=[\psi]\in\PP^n$ & pure state $\ketbra{\psi}{\psi}$ & one ray in the one-particle space\\
degree $d$ & $\HH_d=\Sym^d\HH$ & fixed total particle number\\
$L_\psi^d$ & $\ket\psi^{\otimes d}$ & Veronese/coherent product state\\
multiplication by $y_j$ & $a_j^\dagger$ & creation\\
differentiation $\partial_{y_j}$ & $a_j$ & annihilation\\
order-$m$ vanishing at $P$ & orthogonality to $\cJ_d(\psi,m)$ & invisibility to all coherent jets below order $m$\\
$(I^{(m)})_d$ & common dark subspace & simultaneous local invisibility\\
$\alpha(I^{(m)})$ & first dark particle number & birth threshold of a dark sector\\
$\whalpha(I)$ & asymptotic threshold per order & asymptotic cost of robust nulling\\
\bottomrule
\end{tabularx}
\end{center}

\section{The support--kernel layer: from pure states to mixed constraint information}

The preceding dictionary is a pure-state/Fock-space description.  There is a second level that clarifies what the word ``dark'' means and also explains where mixed states naturally enter.

For a multiplicity profile $\mathbf m=(m_1,\ldots,m_s)$ define local synthesis maps
\[
  \mathsf V_{i,d,\mathbf m}
  =(a^\dagger(\psi_i))^{d-m_i+1}:\HH_{m_i-1}\longrightarrow\HH_d
\]
and combine them into
\[
  \mathsf V_{d,\mathbf m}:\bigoplus_i\HH_{m_i-1}\longrightarrow\HH_d,
  \qquad
  \mathsf V_{d,\mathbf m}(\Phi_1,\ldots,\Phi_s)
  =\sum_i\mathsf V_{i,d,\mathbf m}\Phi_i.
\]
Its range is the visible jet space
\[
  \Ran\mathsf V_{d,\mathbf m}=\sum_i\cJ_d(\psi_i,m_i).
\]
The associated positive frame, or constraint, operator is
\begin{equation}
  \mathsf H_{d,\mathbf m}
  =\mathsf V_{d,\mathbf m}\mathsf V_{d,\mathbf m}^\dagger
  =\sum_i\mathsf V_{i,d,\mathbf m}\mathsf V_{i,d,\mathbf m}^\dagger\ge0.
  \label{eq:constraintH}
\end{equation}
For every $\ket\Phi\in\HH_d$,
\[
  \bra\Phi\mathsf H_{d,\mathbf m}\ket\Phi
  =\sum_i\|\mathsf V_{i,d,\mathbf m}^\dagger\ket\Phi\|^2.
\]
Consequently,
\begin{equation}
  \ker\mathsf H_{d,\mathbf m}
  =\left(\Ran\mathsf V_{d,\mathbf m}\right)^\perp
  =:\cD_d(\Psi,\mathbf m).
  \label{eq:darkkernel}
\end{equation}
By \eqref{eq:bridge}, $\cD_d(\Psi,\mathbf m)$ is exactly the Hilbert-space image of $I(\mathbf m)_d$.

\begin{proposition}[Support--kernel decomposition]
Assume $\mathsf H_{d,\mathbf m}\ne0$ and define
\[
  \rho_{d,\mathbf m}
  =\frac{\mathsf H_{d,\mathbf m}}{\Tr\mathsf H_{d,\mathbf m}}.
\]
Then $\rho_{d,\mathbf m}$ is a density operator on $\HH_d$ and
\[
  \supp\rho_{d,\mathbf m}=\sum_i\cJ_d(\psi_i,m_i),
  \qquad
  \ker\rho_{d,\mathbf m}=\cD_d(\Psi,\mathbf m).
\]
Hence
\[
  \HH_d=\supp\rho_{d,\mathbf m}\oplus\ker\rho_{d,\mathbf m}
\]
is the orthogonal decomposition into visible and dark information.
\end{proposition}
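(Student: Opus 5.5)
The plan is to reduce everything to elementary finite-dimensional linear algebra of the positive operator $\mathsf H_{d,\mathbf m}=\mathsf V_{d,\mathbf m}\mathsf V_{d,\mathbf m}^\dagger$ on the finite-dimensional space $\HH_d$, and then transport the conclusions through the scalar normalization.

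First I will show that $\rho_{d,\mathbf m}$ is a density operator. By \eqref{eq:constraintH}, $\mathsf H_{d,\mathbf m}$ is self-adjoint and positive semidefinite, since $\bra\Phi\mathsf H_{d,\mathbf m}\ket\Phi=\|\mathsf V_{d,\mathbf m}^\dagger\Phi\|^2\ge0$. A nonzero positive semidefinite operator has at least one strictly positive eigenvalue and no negative ones. Hence $\Tr\mathsf H_{d,\mathbf m}>0$, and this is exactly where the hypothesis $\mathsf H_{d,\mathbf m}\ne0$ is used. Dividing by this positive scalar preserves positivity and self-adjointness, and it produces trace one.

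Next I will identify the kernel. Positive rescaling does not change kernels, so $\ker\rho_{d,\mathbf m}=\ker\mathsf H_{d,\mathbf m}$. If $\mathsf H_{d,\mathbf m}\Phi=0$, then $\|\mathsf V^\dagger_{d,\mathbf m}\Phi\|^2=\bra\Phi\mathsf H_{d,\mathbf m}\ket\Phi=0$, so $\ker\mathsf H_{d,\mathbf m}\subseteq\ker\mathsf V^\dagger_{d,\mathbf m}$. The reverse inclusion is trivial. Therefore $\ker\mathsf H_{d,\mathbf m}=\ker\mathsf V_{d,\mathbf m}^\dagger=(\Ran\mathsf V_{d,\mathbf m})^\perp$, which is \eqref{eq:darkkernel}, namely $\cD_d(\Psi,\mathbf m)$.

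For the support, recall that the support of a self-adjoint operator is its range, which equals $(\ker)^\perp$. So $\supp\rho_{d,\mathbf m}=(\ker\mathsf H_{d,\mathbf m})^\perp=\big((\Ran\mathsf V_{d,\mathbf m})^\perp\big)^\perp=\Ran\mathsf V_{d,\mathbf m}$. The double orthogonal complement returns the subspace because $\HH_d$ is finite-dimensional, so every subspace is closed. The paper has already recorded that this range equals $\sum_i\cJ_d(\psi_i,m_i)$: the range of a sum of maps on a direct sum is the sum of the individual ranges, and each individual range is $\cJ_d(\psi_i,m_i)$ by definition. Finally, the orthogonal decomposition $\HH_d=\Ran\oplus\ker$ holds for any self-adjoint operator on a finite-dimensional space. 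By \eqref{eq:bridge}, the kernel summand is the image of $I(\mathbf m)_d$, which justifies calling it dark information.

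There is no real obstacle here. The only points requiring care are the positivity of the trace and the identification of $\supp$ with $(\ker)^\perp$; both are immediate in finite dimensions.
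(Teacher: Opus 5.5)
Your proof is correct and follows the same route as the paper. The paper simply invokes the finite-dimensional identities $\Ran(VV^\dagger)=\Ran V$ and $\ker(VV^\dagger)=\ker V^\dagger=(\Ran V)^\perp$ for $V=\mathsf V_{d,\mathbf m}$, and you supply their standard proofs: the quadratic-form argument for the kernel, and orthogonal complements for the range, together with the check that $\mathsf H_{d,\mathbf m}\ne0$ forces $\Tr\mathsf H_{d,\mathbf m}>0$, which the paper calls ``immediate.''
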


\begin{proof}
Positivity and unit trace are immediate.  For every finite-dimensional operator $V$, one has $\Ran(VV^\dagger)=\Ran V$ and $\ker(VV^\dagger)=\ker V^\dagger=(\Ran V)^\perp$.  Apply this to $V=\mathsf V_{d,\mathbf m}$.
\end{proof}

Thus the mixed-state level does not come from the polynomial degree.  It appears after one aggregates the local observation channels.  Up to normalization, $\mathsf H_{d,\mathbf m}$ is a sum of positive local contributions; after normalization, it is a density operator (mixed whenever its support has dimension greater than one) whose support records everything accessible to the prescribed jets.  A zero response does not mean zero information: it means the state lies in the kernel of the current observation mechanism.

For uniform multiplicity $m$, Demailly's initial degree has the exact support interpretation
\begin{equation}
  \alpha(I^{(m)})
  =\min\{d:\ker\rho_{d,m}\ne0\}.
  \label{eq:supportthreshold}
\end{equation}
Before this threshold the constraint state has full support; at the threshold a dark sector is born.  This is a support phase transition across particle-number sectors.

There is also a perturbative interpretation.  If $\ket{\psi(\varepsilon)}$ is a smooth normalized curve with $\psi(0)=\psi$ and $\ket\Phi\perp\cJ_d(\psi,m)$, then
\[
  A_\Phi(\varepsilon)=\braket{\Phi}{\psi(\varepsilon);d}
\]
has $A_\Phi^{(j)}(0)=0$ for $0\le j<m$.  Therefore
\[
  A_\Phi(\varepsilon)=O(\varepsilon^m),
  \qquad
  |A_\Phi(\varepsilon)|^2=O(|\varepsilon|^{2m}).
\]
Multiplicity is robust nulling: the amplitude is invisible through order $m-1$, and the corresponding rank-one detection probability is suppressed to order $2m$.

The support formulation is not an additional analogy placed on top of apolarity.  It is the same finite-dimensional duality written after a Hermitian identification: below the initial degree every direction is visible to the jet synthesis map, while at the initial degree a nonzero orthogonal complement appears.  The proof of Demailly's inequality can therefore begin from the last full-support sector and ask whether that finite rank certificate amplifies.

\section{No-dark-state amplification and Demailly's inequality}

The support picture makes the proof particularly transparent.  Fix a positive multiplicity profile $\mathbf u=(u_1,\ldots,u_s)$ and write
\[
  F_{\mathbf u}=\bigcap_iI(P_i)^{u_i},
  \qquad
  a=\alpha(F_{\mathbf u}),
  \qquad
  A=a-1.
\]
By definition there is no dark state at particle number $A$:
\[
  (F_{\mathbf u})_A=0.
\]
Set $r_i=a-u_i\ge0$.  The apolar bridge rotates the vanishing statement into the full-coverage identity
\begin{equation}
  S_A=\sum_i L_i^{r_i}S_{u_i-1},
  \label{eq:sourcecoverage}
\end{equation}
where $L_i=L_{\psi_i}$.  Equivalently, the source constraint Hamiltonian at particle number $A$ is strictly positive: every direction is visible.

The proof now asks whether this finite full-rank certificate can be amplified.  The answer is yes along one prime-power sequence.

\begin{theorem}[Bosonic no-dark-state amplification]
\label{thm:amplification}
There exists a prime $p$ such that, for every $q=p^e$ and every positive multiplicity profile $\mathbf v=(v_1,\ldots,v_s)$ satisfying
\[
  v_i\ge q u_i+(n-1)(q-1)
  \qquad\text{for all }i,
\]
one has
\begin{equation}
  \alpha\!\left(\bigcap_iI(P_i)^{v_i}\right)
  \ge q\,\alpha(F_{\mathbf u})+(n-1)(q-1).
  \label{eq:amplification}
\end{equation}
Equivalently, if the source constraint operator has no dark state at particle number $A=a-1$, then the amplified target constraints have no dark state at
\[
  D_q=qA+n(q-1).
\]
\end{theorem}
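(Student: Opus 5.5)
The plan is to prove \eqref{eq:amplification} in its ideal-theoretic form over a field of characteristic $p$, and then transport it to characteristic zero. All data are finite rank statements about the synthesis maps $\mathsf V_{d,\mathbf m}$, so this transport is a matter of reducing matrices modulo $p$.

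\textbf{Step 1: choice of $p$.}
\begin{itemize}
\item Choose a finitely generated $\mathbb Z$-algebra containing the coordinates of the $\psi_i$, and coordinates in which $y_0$ does not vanish at any $P_i$.
\item The source certificate \eqref{eq:sourcecoverage} says that the matrix of $\mathsf V_{A,\mathbf u}$ has full rank $\dim S_A$. Some maximal minor is nonzero, and it stays nonzero after reduction modulo a maximal ideal of residue characteristic $p$, for all but finitely many $p$. The same holds for the requirement that the $P_i$ stay distinct and off $\{y_0=0\}$.
\item Conversely, suppose a target profile $\mathbf v$ had a nonzero form in $\bigcap_i I(P_i)^{v_i}$ of degree $D_q$ in characteristic zero. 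Clear denominators and divide by the content. This gives a nonzero reduction with the same vanishing orders, because vanishing at $P_i$ to order $v_i$ is the linear condition $\mathsf V_{i,D_q,\mathbf v}^\dagger\Phi=0$ with integral coefficients.
\end{itemize}
So it suffices to show that $\bigcap_i I(P_i)^{v_i}$ has no nonzero element of degree $D_q=qa+(n-1)(q-1)-1$ over $\overline{\mathbb F}_p$. Here we use $(F_{\mathbf u})_A=0$ in characteristic $p$, where $F_{\mathbf u}$ is the intersection of powers of ideals of points. The set of bad primes depends only on $\mathbf u$ and the $P_i$, not on $q$ or $\mathbf v$, which is what the statement requires.

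\textbf{Step 2: local containment.}
For a single point, choose coordinates with $P=[1:0:\cdots:0]$, so $I(P)=(y_1,\dots,y_n)$. A monomial $y^\gamma$ with $\gamma$ supported on $y_1,\ldots,y_n$ avoids $(I(P)^u)^{[q]}$ exactly when $\sum_{j\ge1}\lfloor\gamma_j/q\rfloor\le u-1$. Its $(y_1,\ldots,y_n)$-degree is then at most $q(u-1)+n(q-1)=qu+(n-1)(q-1)-1$. Hence
\[
I(P)^{v}\subseteq (I(P)^{u})^{[q]}\qquad\text{whenever } v\ge qu+(n-1)(q-1).
\]
This is exactly the threshold in the hypothesis. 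Since Frobenius is flat on the regular ring $S$, $\bigcap_i (I(P_i)^{u_i})^{[q]}=(F_{\mathbf u})^{[q]}$. Therefore any $f$ in the target ideal lies in $F_{\mathbf u}^{[q]}$.

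\textbf{Step 3: degree count (main obstacle).}
A generator count alone fails here. $F_{\mathbf u}^{[q]}$ has generators in degree $qa\le D_q$, so membership in it does not force $f=0$. The missing $(n-1)(q-1)$ has to come from saturation.

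\begin{itemize}
\item Since $y_0$ is a nonzerodivisor modulo $F_{\mathbf u}$ and no $P_i$ lies on $y_0=0$, I would dehomogenize at $y_0$, or equivalently work over $S^q[y_0]$.
\item Consider $S$ as a free module over $\mathbb K[y_0,y_1^q,\ldots,y_n^q]$, with basis $y^\gamma$ for $\gamma\in\{0,\ldots,q-1\}^n$ in the variables $y_1,\ldots,y_n$ only.
\item Expand $f=\sum_\gamma y^\gamma g_\gamma$. Membership in $F_{\mathbf u}^{[q]}$, together with flatness and $y_0$-saturation of $F_{\mathbf u}$, should force each coefficient to be $y_0$-power times $q$-th power data coming from $F_{\mathbf u}$. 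The relevant degree in the original grading is $(D_q-|\gamma|)/q$ up to powers of $y_0$.
\item Since $|\gamma|\le n(q-1)$, this degree is at least $(D_q-n(q-1))/q=A$. After dividing out $y_0$, the data lie in $(F_{\mathbf u})_{\le A}$, which vanishes by Step 1.
\end{itemize}

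This saturation bookkeeping is the delicate point. I would first try to phrase it dually: apply Frobenius to the coverage identity \eqref{eq:sourcecoverage}, which gives $S_A^{[q]}\subseteq\sum_i L_i^{qr_i}S_{u_i-1}^{[q]}$. Multiplying by the monomials $y^\gamma$, with $\gamma$ in the variables $y_1,\dots,y_n$ and $|\gamma|\le n(q-1)$, and by powers of $y_0$, should produce a spanning family of the target inverse system in degree $D_q$. This is the statement that the target constraint operator has no dark state at $D_q$. The inequality \eqref{eq:amplification} then follows from Step 1.
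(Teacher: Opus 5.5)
\paragraph{The gap is in Step 3.}
This is not a bookkeeping issue. After Step 2 you keep only the information $f\in F_{\mathbf u}^{[q]}$, and for $n\ge 2$ that cannot force $f=0$ in degree $D_q=qA+n(q-1)=qa+(n-1)(q-1)-1$. Take any nonzero $h\in(F_{\mathbf u})_a$ and any monomial $y^\lambda$ of degree $(n-1)(q-1)-1\ge 0$. Then $y^\lambda h^q$ is a nonzero element of $(F_{\mathbf u}^{[q]})_{D_q}$, and no property of $F_{\mathbf u}$, such as $y_0$-saturation, excludes it.

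Your own count shows what goes wrong. Use the Frobenius decomposition $f=\sum_{0\le\lambda_j<q}h_\lambda^q y^\lambda$ in all $n+1$ variables; this is valid since $\overline{\mathbb F}_p$ is perfect. Then $f\in F_{\mathbf u}^{[q]}$ means exactly that every $h_\lambda\in F_{\mathbf u}$. The residue bound gives $\deg h_\lambda=(D_q-|\lambda|)/q\ge A$, which is a \emph{lower} bound. But $(F_{\mathbf u})_A=0$ propagates only downward, to $(F_{\mathbf u})_t=0$ for $t\le A$. Killing $h_\lambda$ needs $\deg h_\lambda\le A$, and that holds only for the top residues $|\lambda|\ge n(q-1)$. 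Your step ``the data lie in $(F_{\mathbf u})_{\le A}$'' reverses the inequality you just derived. Step 2 has already discarded the information that would be needed.

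\paragraph{The right route is on the dual side, and the paper takes it.}
A lower bound on $\deg h_\lambda$ is exactly what helps there, because coverage propagates upward. Your last paragraph points to this route, but it has to be organized differently.

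Over the residue field $\kappa$, set $J=(L_1^{r_1},\ldots,L_s^{r_s})$. The reduced source certificate $J_A=S_A$ gives $J_t=S_t$ for all $t\ge A$. Decompose any $G\in S_{D_q}$ the same way. Each $h_\lambda$ has degree $\ge A$, so it lies in $J$ and $h_\lambda^q\in J^{[q]}$. Hence $S_{D_q}=(L_1^{qr_1},\ldots,L_s^{qr_s})_{D_q}$. You need all $h$ of degree $\ge A$ here. Products of $q$-th powers of forms in $S_A$ with monomials in $y_1,\ldots,y_n$ and powers of $y_0$ do not span; for example they miss $y_1^4$ when $n=2$, $q=2$, $A=1$.

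Crucially, you cannot convert this coverage over $\kappa$ into $(\bigcap_iI(\bar P_i)^{v_i})_{D_q}=0$. The apolarity bridge rests on ordinary differentiation, which degenerates in characteristic $p$ in degrees $\ge p$, and there is no Hermitian constraint operator over $\overline{\mathbb F}_p$. So the coverage cannot be fed into Step 1.

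The paper transfers the coverage itself, as follows.
\begin{enumerate}
\item A nonzero maximal minor of $\bigoplus_iS_{D_q-qr_i}\to S_{D_q}$, $(h_i)\mapsto\sum_iL_i^{qr_i}h_i$, over $\kappa$ is the reduction of an integral minor. Hence the coverage also holds over $\CC$.
\item Since $e_i=D_q-v_i+1\le qr_i$, this gives $S_{D_q}=\sum_iL_i^{e_i}S_{D_q-e_i}$.
\item Characteristic-zero apolarity then yields $(\bigcap_iI(P_i)^{v_i})_{D_q}=0$.
\end{enumerate}

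\paragraph{Consequences for Steps 1 and 2.}
In this route the hypothesis on $v_i$ enters only through $e_i\le qr_i$, so your Step 2 is not needed. Neither is the reduction of the target conditions in Step 1, which as written is also faulty. In monomial coordinates the conditions $\Phi_f\perp\cJ_{D_q}(\psi_i,v_i)$ read $k!\,(\partial^\gamma f)(\psi_i)=0$ for $|\gamma|=v_i-1$, with $k=D_q-v_i+1$. These are vacuous modulo $p$ once $k\ge p$. To make that step work you would need Hasse derivatives of all orders $<v_i$ and rank semicontinuity, rather than dividing by a content.
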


\begin{proof}
Equation~\eqref{eq:sourcecoverage} is the surjectivity of the finite multiplication map
\[
  \Phi_A:\bigoplus_i S_{u_i-1}\longrightarrow S_A,
  \qquad
  (h_i)_i\longmapsto\sum_iL_i^{r_i}h_i.
\]
Choose a maximal minor witnessing surjectivity.  Spread the finitely many coefficients occurring in this matrix over a finitely generated integral $\mathbb Z$-algebra and localize so that the chosen determinant remains invertible.  A standard finite-type specialization argument then gives a finite residue field $\kappa$ of some characteristic $p>0$ for which the reduced source map remains surjective.  This prime is fixed from now on.

Let $q=p^e$ and put
\[
  D_q=qA+n(q-1).
\]
Over $\kappa$ set
\[
  J=(L_1^{r_1},\ldots,L_s^{r_s}).
\]
The source coverage says $J_A=S_A$, hence $J_t=S_t$ for every $t\ge A$.  We claim
\begin{equation}
  (J^{[q]})_{D_q}=S_{D_q},
  \qquad
  J^{[q]}=(L_1^{qr_1},\ldots,L_s^{qr_s}).
  \label{eq:frobcoverage}
\end{equation}
Because $\kappa$ is finite and hence perfect, every $G\in S_{D_q}$ has a unique decomposition
\begin{equation}
  G=\sum_{0\le\lambda_j<q} h_\lambda^{\,q}y^\lambda.
  \label{eq:frobdecomp}
\end{equation}
For each nonzero homogeneous summand,
\[
  q\deg h_\lambda=D_q-|\lambda|.
\]
There are $n+1$ variables and $|\lambda|\le(n+1)(q-1)$, so
\[
  q\deg h_\lambda
  \ge qA+n(q-1)-(n+1)(q-1)
  =q(A-1)+1.
\]
Since $\deg h_\lambda$ is integral, $\deg h_\lambda\ge A$.  Hence $h_\lambda\in J$, so $h_\lambda^q\in J^{[q]}$.  Every summand of \eqref{eq:frobdecomp} therefore lies in $J^{[q]}$, proving \eqref{eq:frobcoverage}.

In occupation-number language, the same step is the componentwise decomposition
\[
  \alpha=q\beta+\lambda,
  \qquad 0\le\lambda_j<q.
\]
The block $q\beta$ scales, while $\lambda$ is a bounded occupation residue.  The worst possible residue has total size $(n+1)(q-1)$; the shift in $D_q$ is exactly what prevents this residue from pushing the scalable core below the original full-coverage degree.  This is an arithmetic decomposition in a positive-characteristic Bargmann model, not a physical cloning or quantum channel.

The equality \eqref{eq:frobcoverage} is again a finite surjectivity statement.  Some maximal minor of its multiplication matrix is nonzero over $\kappa$; therefore the corresponding integral determinant is nonzero before reduction and remains nonzero after returning to characteristic zero.  Thus the amplified coverage also holds over $\CC$.

Now put
\[
  G_{\mathbf v}=\bigcap_iI(P_i)^{v_i}.
\]
If $v_i>D_q$ for some $i$, then $(G_{\mathbf v})_{D_q}=0$ immediately.  Otherwise define
\[
  e_i=D_q-v_i+1.
\]
The assumed lower bound on $v_i$ gives
\[
  e_i\le q(a-u_i)=qr_i.
\]
Hence
\[
  L_i^{qr_i}S_{D_q-qr_i}
  \subseteq
  L_i^{e_i}S_{D_q-e_i}.
\]
Summing and using the amplified coverage yields
\[
  S_{D_q}=\sum_iL_i^{e_i}S_{D_q-e_i}.
\]
Apolarity rotates this back to
\[
  (G_{\mathbf v})_{D_q}=0.
\]
Therefore
\[
  \alpha(G_{\mathbf v})\ge D_q+1
  =qa+(n-1)(q-1),
\]
which is \eqref{eq:amplification}.
\end{proof}

The theorem isolates the role of each language.  Hilbert-space orthogonality turns symbolic vanishing into a support statement; Frobenius amplifies one full-rank certificate; a nonzero maximal minor transports only that finite rank datum back to characteristic zero.  Nothing requires a positive-characteristic version of ordinary differential apolarity.

\begin{corollary}[Demailly]
\label{cor:demailly}
Let $Z=\{P_1,\ldots,P_s\}\subset\PP^n_{\CC}$ be nonempty and finite, with ideal $I=I(Z)$.  Then for every $m\ge1$,
\[
  \whalpha(I)
  \ge
  \frac{\alpha(I^{(m)})+n-1}{m+n-1}.
\]
\end{corollary}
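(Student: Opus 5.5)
We derive the corollary from Theorem~\ref{thm:amplification} applied to the uniform profile $\mathbf u=(m,\ldots,m)$. Then $F_{\mathbf u}=I^{(m)}$ and $a=\alpha(I^{(m)})$. The theorem supplies a prime $p$, fixed once and for all. For each $q=p^e$ we choose the uniform target multiplicity
\[
  v=v_q=qm+(n-1)(q-1),
\]
which meets the hypothesis $v_i\ge qu_i+(n-1)(q-1)$ with equality. Since $\bigcap_iI(P_i)^{v}=I^{(v)}$ for reduced $Z$, inequality~\eqref{eq:amplification} becomes
\[
  \alpha\!\left(I^{(v_q)}\right)\ge q\,\alpha(I^{(m)})+(n-1)(q-1).
\]

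Next we divide by $v_q$ and let $e\to\infty$. The Waldschmidt constant is a limit over all $r$: it exists by subadditivity of $r\mapsto\alpha(I^{(r)})$ and Fekete's lemma. We may therefore compute it along the subsequence $r=v_q$:
\[
  \whalpha(I)=\lim_{e\to\infty}\frac{\alpha(I^{(v_q)})}{v_q}
  \ge\lim_{e\to\infty}\frac{q\,\alpha(I^{(m)})+(n-1)(q-1)}{qm+(n-1)(q-1)}.
\]
Dividing numerator and denominator by $q$, the right side tends to
\[
  \frac{\alpha(I^{(m)})+n-1}{m+n-1},
\]
which is the claimed bound. Alternatively, one can use the standard fact $\alpha(I^{(r)})/r\ge\whalpha(I)$ only in the reverse direction, so the limit formulation is the cleaner route.

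The only point needing care is the justification that $\whalpha(I)$ may be computed along the sparse sequence $v_q$. For this we rely on subadditivity, $I^{(r)}I^{(r')}\subseteq I^{(r+r')}$, which gives $\alpha(I^{(r+r')})\le\alpha(I^{(r)})+\alpha(I^{(r')})$. Fekete's lemma then shows that $\lim_r\alpha(I^{(r)})/r$ exists and equals $\inf_r\alpha(I^{(r)})/r$, so any subsequence has the same limit.

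Two degenerate cases remain. If $Z$ is nonempty, then $I\ne S$ and all the quantities are finite and positive. The case $n=0$ is trivial but is also covered formally by the same computation. The substantive content, namely the amplification itself, is entirely contained in Theorem~\ref{thm:amplification}, so no further obstacle is expected.
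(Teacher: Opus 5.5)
Your proposal is correct and is essentially the paper's own proof. You use the same three steps: the uniform profile $u_i=m$, the target $v_q=qm+(n-1)(q-1)$ fed into Theorem~\ref{thm:amplification}, and Fekete's lemma to compute $\whalpha(I)$ along the prime-power subsequence $v_q\to\infty$. One small correction: your aside that $n=0$ is ``covered formally'' is inaccurate, since there $I(Z)=0$ and $m+n-1$ vanishes for $m=1$, so that case is degenerate (the paper tacitly assumes $n\ge1$); this does not affect the main argument.
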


\begin{proof}
Fix $m$ and put $a=\alpha(I^{(m)})$.  Apply Theorem~\ref{thm:amplification} with $u_i=m$ and
\[
  v_q=qm+(n-1)(q-1).
\]
For every $q=p^e$ supplied by the theorem,
\[
  \frac{\alpha(I^{(v_q)})}{v_q}
  \ge
  \frac{qa+(n-1)(q-1)}{qm+(n-1)(q-1)}.
\]
The sequence $\alpha(I^{(t)})$ is subadditive because $I^{(r)}I^{(t)}\subseteq I^{(r+t)}$, so Fekete's lemma gives the Waldschmidt limit.  Letting $q\to\infty$ along the prime-power subsequence gives
\[
  \whalpha(I)
  \ge
  \frac{a+n-1}{m+n-1}.
\]
\end{proof}

The proof explains the otherwise mysterious correction $n-1$.  In the positive-characteristic occupation decomposition, there are $n+1$ residue coordinates, each bounded by $q-1$.  After converting between the last no-dark degree, multiplicity, and the first dark degree, the net asymptotic residue cost becomes exactly $n-1$.  Thus Demailly's correction is the remainder cost left after the scalable Frobenius block has been separated from the bounded part.

H\`a and Sivakumar recently proved Demailly's conjecture for arbitrary finite point sets over an algebraically closed field of characteristic zero \cite{HaSivakumar2026}.  Their proof also passes to positive characteristic but works directly with symbolic vanishing and Hasse derivatives.  The route above first rotates the problem by characteristic-zero apolarity into a finite spanning statement, reduces only a rank certificate, performs the Frobenius amplification on that dual side, and then lifts only a maximal minor.  A more general coefficientwise Frobenius-dilation framework and its applications to varieties of powers are developed in joint work with T\`ai Huy H\`a, currently submitted \cite{DinhHa2026}.  The present formulation is designed to make the projective, bosonic, and arithmetic mechanisms visible simultaneously.

\section{Outlook: the opposite support transition for powers of forms}

The same positive-operator language also clarifies why a different class of secant problems requires a different information mechanism.  For general $G_1,\ldots,G_s\in S_d=\Sym^dV^*$, Terracini's lemma leads to
\[
  \Phi_G:S_d^s\longrightarrow S_{kd},
  \qquad
  (H_i)_i\longmapsto\sum_i H_iG_i^{k-1}.
\]
If $\mathsf K_G=\Phi_G\Phi_G^*$, then
\[
  \ker\mathsf K_G
  =\left(\sum_iG_i^{k-1}S_d\right)^\perp,
\]
so endpoint nondefectivity is exactly the full-support condition $\ker\mathsf K_G=0$.  Thus this problem moves through the same support--kernel geometry as Demailly, but in the opposite direction: Demailly asks when a dark sector is born as particle number grows, whereas the secant endpoint asks when a dark cokernel disappears as tangent blocks are added.  The generic $k$-rank problem suggested by Ottaviani is one prominent instance \cite{LORS2018}.

The contrast is especially transparent for fourth powers, whose tangent factor is cubic.  Frobenius is effective in the Demailly proof because it suppresses mixed terms and leaves a scalable block plus a bounded residue.  Cubic polarization instead stores information in the mixed terms.  With $\omega=i$,
\begin{equation}
  \frac14\sum_{\ell=0}^3\omega^{-r\ell}(P+\omega^\ell Q)^3
  =\binom3rP^{3-r}Q^r,
  \qquad r=0,1,2,3,
  \label{eq:phasecycling}
\end{equation}
so phase cycling separates the pure and mixed coherence sectors exactly.  This suggests a complementary principle: some algebraic rank problems are helped by decoupling, while others require coherence-preserving recovery.  Developing that distinction for varieties of powers is a separate problem and is not used anywhere in the proof above.

\medskip
\noindent\textbf{Conclusion.}
The graded polynomial ring and bosonic Fock space are the same symmetric algebra viewed through different structures.  Once a Hermitian realization is fixed, symbolic vanishing becomes orthogonality to coherent jet sectors, and the aggregate jet map produces a positive constraint operator whose support is the visible span and whose kernel is the symbolic dark sector.  In this language $\alpha(I^{(m)})$ is the first particle number at which full support is lost, while the Waldschmidt constant is the asymptotic rate of that threshold.  Frobenius amplification propagates the last full-support certificate below the threshold, yielding Demailly's inequality and a structural interpretation of its dimensional correction.

\section*{Declaration of generative AI and AI-assisted technologies}
During the preparation of this work, the author used OpenAI ChatGPT (GPT-5.6 Sol) for exploratory reformulations, algebraic cross-checks, organization of the exposition, and language editing.  The author reviewed the manuscript and takes full responsibility for its content.

\end{document}